\newtheorem{definition}{Definition}[section]
\author{%
Peyman Afshani\thanks{Aarhus University, Denmark.}
\and
John Iacono\thanks{Universit\'e Libre de Bruxelles and New York University. Supported by Fonds de la Recherche Scientifique-FNRS under Grant no MISU F 6001 1 and NSF grant CCF-1911245.}
\and 
Varunkumar Jayapaul\thanks{Universit\'e Libre de Bruxelles. Research partially completed while at Universit\'e Libre de Bruxelles, supported by Fonds de la Recherche Scientifique-FNRS under Grant no MISU F 6001 1.} 
\and 
Ben Karsin\thanks{NVIDIA. Research partially completed while at Universit\'e Libre de Bruxelles, supported by Fonds de la Recherche Scientifique-FNRS under Grant no MISU F 6001 1.}
\and
Nodari Sitchinava\thanks{University of Hawaii at Manoa, USA. Partially supported by NSF grants CCF-1911245.}
}
\date{}
\newcommand\relatedversion{}
\renewcommand\relatedversion{\thanks{A extended abstract of this paper will appear at APOCS 2022.}} 
\title{Locality-of-Reference Optimality of Cache-Oblivious Algorithms\relatedversion{}}
\definecolor{darkgreen}{rgb}{0,0.5,0}  
\definecolor{darkpink}{rgb}{0.9,0.3,0.5}  
\definecolor{darkbrown}{rgb}{0.4,0.25,0.18}
\newcommand{\john}[1]{\textcolor{red}{$\langle${\sc John Says:} #1$\rangle$}}
\newcommand{\vkj}[1]{\textcolor{blue}{$\langle${\sc Vkj Says:} #1$\rangle$}}
\newcommand{\ben}[1]{\textcolor{darkgreen}{$\langle${\sc Ben Says:} #1$\rangle$}}
\newcommand{\nodari}[1]{\textcolor{purple}{$\langle${\sc Nodari Says:} #1$\rangle$}}
\newcommand{\peyman}[1]{\textcolor{darkbrown}{$\langle${\sc Peyman Says:} #1$\rangle$}}
\renewcommand{\nodari}[1]{}
\renewcommand{\peyman}[1]{}
\renewcommand{\john}[1]{}
\renewcommand{\vkj}[1]{}
\renewcommand{\ben}[1]{}
\newcommand{\loc}{\ell}
\newcommand{\bl}[2]{\{#2\}_{#1}}
\newcommand{\wi}{\mathcal{W}^{\textsc{opt}}}
\newcommand{\wlru}{\mathcal{W}^{\textsc{lru}}}
\newcommand{\w}{\mathcal{W}}
\newcommand{\cost}[2]{\ensuremath{Q^{#1}_{#2}}}
\newcommand{\model}{\ensuremath{\mathbb{M}}}
\newcommand{\wc}[2]{\ensuremath{W^{#1}_{#2}}}
\newcommand{\wset}{\ensuremath{\mathcal{W}^{\model}}}
\newcommand{\cco}[1][M]{\cost{\textsc{co}}{#1,B}}
\newcommand{\clru}[1][M]{\cost{\textsc{lru}}{#1,B}}
\newcommand{\clor}[1][\loc]{\cost{\textsc{LoR}}{#1}}
\newcommand{\wcco}[1][M]{\wc{\textsc{co}}{#1,B}}
\newcommand{\wclru}[1][M]{\wc{\textsc{lru}}{#1,B}}
\newcommand{\wclor}[1][\loc]{\wc{\textsc{LoR}}{#1}}
\newcommand{\dist}[1]{\delta_t(#1)}
\newcommand{\jr}[2]{{\cost{\textsc{LoR}}{#1}}(#2)}
\newcommand{\jrh}[2]{\widehat{\cost{\textsc{LoR}}{#1}}(#2)}
 \newcommand{\jlr}[1]{\cost{\textsc{LoR}}{\loc}(#1)}
\newcommand{\jlrk}[2]{\cost{\textsc{LoR}}{_{\loc_{#2}}}(#1)}
\newcommand{\locset}{\mathcal{L}}
\newcommand{\blocset}{\mathcal{L_{B}}}
\newcommand{\mP}{\mathcal{P}}
\newcommand{\mI}{\mathcal{I}}
\newcommand{\mA}{\mathcal{A}}
\newcommand{\mM}{\ensuremath{\mathcal{M}_P(B)}}
\newcommand{\PI}[1]{\mathcal{P}^{\textsc{opt}}_{^{#1}}}
\newcommand{\PLRU}[1]{\mathcal{P}^{\textsc{lru}}_{^{#1}}}
\newcommand{\coshift}[2]{{\cost{\textsc{sco}}{#1}}(#2)}
\newcommand{\ccoshift}[2][M]{{\cost{\textsc{sco}}{#1,B}}(#2)}
\newcommand{\lrushift}[2]{{\cost{\textsc{slru}}{#1}}(#2)}
\newcommand{\co}[2]{{\cost{\textsc{co}}{#1}}(#2)}
\newcommand{\lru}[2]{{\cost{\textsc{lru}}{#1}}(#2)}
\newcommand{\tj}[2]{t_{#1}(#2)}
\newcommand{\cosmooth}{memory-smooth\xspace}
\newcommand{\coopt}{\textsc{OPT}^{\textsc{co}}_{^{M,B}}}
\newcommand{\lruopt}{\textsc{OPT}^{\textsc{lru}}_{^{M,B}}}
\newcommand{\jumpopt}{\textsc{OPT}_{j}}
\newcommand{\eopt}{E_{\textsc{OPT}}}
\newcommand{\lrp}[1]{\left( #1 \right)}
\newcommand{\bstable}{\ensuremath{B}-stable\xspace}       
\newcommand{\bstability}{\ensuremath{B}-stability\xspace} 
\newcommand{\Iw}{I_w}
\newcommand{\locinv}{\loc^{^-1}}
\DeclarePairedDelimiter{\sqbracketsX}{[}{]}
\newcommand*{\sqbrackets}{\sqbracketsX*}
\newcommand{\para}[1]{\textbf{#1}}
\DeclareMathOperator{\expop}{\mathbb{E}}
\newcommand*{\E}{\expop\sqbrackets}
\newcommand{\N}{\mathbb{N}}
\newcommand{\R}{\mathbb{R}}
\begin{document}

\maketitle

\begin{abstract}
\small\baselineskip=9pt
The program performance on modern hardware is characterized by \emph{locality
of reference}, that is, it is faster to access data that is close in address
space to data that has been accessed recently than data in a random location.
This is due to many architectural features including caches, prefetching,
virtual address translation and the physical properties of a hard disk drive;
attempting to model all the components that constitute the performance of a
modern machine is impossible, especially for general algorithm design purposes.
What if one could prove an algorithm is asymptotically optimal on all systems
that reward locality of reference, no matter how it manifests itself within
reasonable limits? We show that this is possible, and that excluding some
pathological cases, cache-oblivious algorithms that are asymptotically optimal
in the ideal-cache model are asymptotically optimal in any reasonable setting
that rewards locality of reference. This is surprising as the cache-oblivious
framework envisions a particular architectural model involving blocked memory
transfer into a multi-level hierarchy of caches of varying sizes, and was not
designed to directly model locality-of-reference correlated performance.

\end{abstract}



\includecomment{onlymain}
\excludecomment{onlyproof}
\excludecomment{onlyapp}

\section{Introduction}\label{model}
Modeling memory access time of modern computers is an important area of research that lies
at the intersection of theoretical computer science, algorithm engineering, and practical aspects of computing. 
The reality is that modern computers are extremely complicated with numerous components that try to reduce 
the access time of the elements in the memory. 
Consequently, the access time 
varies by orders of magnitude depending on whether favorable conditions are met or not.
Choices in the algorithm design can highly impact reaching those favorable conditions,
which necessitates building good theoretical models of memory structures
of modern computers. 

The main direction of existing theoretical work has been on modeling of the memory hierarchy. This has been initiated by the Disk Access Model (DAM)~\cite{DBLP:journals/cacm/AggarwalV88}, which is also known as the I/O model or the External Memory (EM) model. 
The DAM assumes a (fast) memory of size $M$ and a disk (i.e., slow memory) of infinite size.
The disk stores the input in blocks of size $B$ and can be read or written to via the input-output (I/O) operations, where
each I/O transfers one block of data at unit cost. 
The analysis then typically only considers the number of I/Os (read or
written), ignoring any other computational cost.
The justification is that since a disk is so much slower than internal
memory, minimizing the number of block transfers and ignoring all else is a good
model of runtime, however, sometimes this is not a realistic assumption. 
For example, when DAM is used to model cached memory (modeling cache misses as I/O operations, 
$M$ as the size of the cache, and $B$ as the size of the cache lines), 
the relative difference in the cost of a cache misses compared to the arithmetic operations is much smaller. 
Aside from this, the DAM model has additional limitations, e.g., 
it ignores the fact that accessing adjacent blocks on a disk is in practice
much faster than two random blocks~\cite{DBLP:books/daglib/0022093} and
it models only two levels of memory.  

\begin{figure*}[t!]
  \centering
  \includegraphics[width=\textwidth]{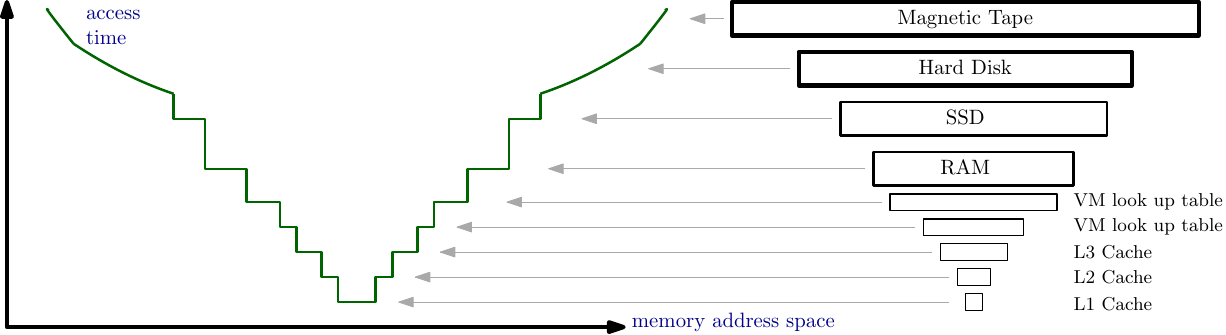}
  \caption{On the right, one can see a hierarchical model of modern computers, with multiple levels of permanent storage, look up tables for virtual memory and CPU cache. 
    From top to bottom, the levels become smaller but faster. 
    On the left, a sample access time is shown (not to scale). Here, it is assumed that each level contains a continuous portion of the elements from the
  level above it. As a result, the access time increases as we move away from the elements that are located in the fastest memory, as a complicated function of the distance.
    For caches, the access time dramatically increases when we hit an element
  that is stored in the next (slower) level.  For mechanical devices (tapes, hard disks),
 the access time increases rather smoothly
due to mechanical processes involved.}
  \label{fig:moti}
\end{figure*} 

Modeling more than two levels of the memory hierarchy is rather challenging. The big problem is that very precise 
models  (e.g., the ones defining individual parameters for each level of memory hierarchy~\cite{DBLP:journals/jcss/Valiant11}) are often too complicated, making it hopelessly difficult 
to design and analyze algorithms. Other approaches, such as the \emph{hierarchical memory model (HMM)}
\cite{DBLP:conf/stoc/AggarwalACS87,DBLP:conf/focs/AggarwalCS87},
model memory with variable access costs by assuming that the cost to access a
memory address $x$ is a non-decreasing function, $f(x)$, of the address itself. 
However, this does not accurately represent modern caches.

The most successful attempt at analyzing cache misses in multi-level cache
hierarchies is probably the {\em cache-oblivious}
framework~\cite{DBLP:journals/talg/FrigoLPR12}.  It surprisingly avoids the
complexity of modeling memory hierarchies by completely avoiding it: the
algorithms are designed ``obliviously to $M$ and $B$'', i.e., in the classical
(single-level) RAM model.  However, if such algorithms are analyzed in the
two-level DAM with the best cache-management policy, also known as the
ideal-cache model, and they happen to be efficient in terms of cache misses
with respect to the two levels of memory, then they are also efficient for
all levels of multi-level memory hierarchy. Moreover, it has been
shown~\cite{DBLP:journals/cacm/SleatorT85} that such algorithms are also
efficient for many reasonable cache management policies, e.g., least-recently
used (LRU) policy, typically implemented by hardware in practice. However, up
to now, cache-oblivious algorithms haven't been shown to optimize anything
beyond cache misses.

\para{Capturing locality of reference.} 
In a real hardware, cache utilization is only one aspect that affects program runtime. For example, Jurkiewicz and Mehlhorn~\cite{doi:10.1137/1.9781611972931.13} show that the time it takes to perform address translations for virtual memory noticeably affects the runtime of programs on real hardware. 
Figure~\ref{fig:moti} illustrates the complicated ways various hardware features affect the memory access time. 
It shows a memory hierarchy, where cheaper and larger but slower memories are placed at the top.
The functions that describe the access times of the components may have different behaviors, e.g., for a magnetic tape the access time is 
basically a linear function of the physical distance, whereas for hard disks it is a more complicated function. 

Locality of reference is a fundamental principle of computing that heavily impacts both hardware and algorithm design~\cite{DBLP:books/daglib/0022093}.  
In his widely-cited article~\cite{PeterDenning}, Peter Denning provides an overview of the history
of the concept and how it became very popular in almost all aspects of computing.
To quote him,
``The locality was adopted as an idea almost immediately by operating systems,
database, and hardware architects.''

The DAM algorithm and the cache-oblivious algorithms try to capture 
\emph{spatial locality} and \emph{temporal
locality} -- the two fundamental components of the
notion of ``locality of reference'' -- from an algorithmic point of view. 
However, given the complexity of modern hardware, the main question is if other, possibly more complex hardware features, can be modeled simply 
enough to facilitate design and analysis of algorithms. 
The approach of the HMM model~\cite{DBLP:conf/stoc/AggarwalACS87,DBLP:conf/focs/AggarwalCS87} to model the cost of access as a function of memory address is one way to keep the modeling complexity at bay. 
For instance the authors~\cite{DBLP:conf/stoc/AggarwalACS87,DBLP:conf/focs/AggarwalCS87} show that if the cost of accessing address $x$ is $\log(x)$ then  
sorting $n$ elements can be done in $O(n \log n \log\log n)$ time.
But is $\log(x)$ the correct function for modern (and ever-changing) hardware?


\subsection{Our Results.}
We propose to pick up where the previous attempts have left off by following a holistic approach.
We present the \emph{locality of reference
(LoR)} model, a computational model that looks at memory in a new way: the cost
of a memory access is based on the proximity from prior accesses via what we
call a \emph{locality function}.  

More specifically, we consider the machine as having an infinite memory with a linear \textit{address space}, i.e., memory cells are
numbered with the set of natural numbers. 
Let $E = \{e_1, \ldots, e_{|E|}\}$ be the sequence of memory addresses accessed by an algorithm $A$ while running on a given
input. 
A simple locality function $\loc$ can define the cost of accessing address $e_i$ as a function of the {\em distance} from the address $e_{i-1}$ of the preceding access, for example, $\log(|e_i - e_{i-1}|)$, $\sqrt{|e_i - e_{i-1}|}$, or any other arbitrarily function of $|e_i - e_{i-1}|$.

A specific locality function can capture the (complicated) cost of accessing the data on the hardware running the algorithm. 
For example, in the classical random access memory (RAM) model, the sequence $E$ simply takes $O(|E|)$ time, so setting $\loc$ to a constant function captures the RAM model. As we show later, by setting $\loc$ to a logarithmic function, we can model the cost of the TLB in virtual memory translation.

The goal of this paper is not to define locality functions for various models of computation. Instead,  our results show that cache-oblivious algorithms go beyond minimizing the number of cache misses. In particular, we show that 
optimal cache-oblivious algorithms are \emph{locality-of-reference optimal}, meaning,
they are asymptotically optimal with respect to {\em any} choice of locality function, subject to some mild constraints (needed to ensure that these functions reward locality of reference). In Section~\ref{querytype} we present our result in a simplified setting, in which we focus on the algorithms that do not benefit from large cache sizes. In Section~\ref{with-memory} we generalize the result to more general  algorithms that do utilize the full cache.

\subsection{Example application: optimality of van Emde Boas layout.}
We now demonstrate an example application of our results which will appear as Theorem~\ref{t:main}: namely, that the van Emde Boas layout---a layout for implicit static search trees that is optimal for cache-oblivious searching~\cite{DBLP:journals/talg/FrigoLPR12}---is also optimal for address translation on modern virtual memory architectures.
This cost of address translation is non-negligible and has been observed to 
impact the performance of fundamental algorithms such
as sorting and permuting in practice~\cite{doi:10.1137/1.9781611972931.13}.

Let us review modern virtual memory design.
Consider a machine that uses $U$ bits for addressing memory.
Virtual memory is implemented as a trie of degree $2^b$, for some parameter $b$,
where the translation process translates $b$ bits at a time, 
starting from the most significant bits of the address.
In the worst-case, one translation necessitates $U/b$ lookups, but this cost is often much lower when
TLB caching is used. 
In particular, if two addresses $e_i$ and $e_{i-1}$ share $kb$ most significant bits, then after translating the first memory address, the
first $k$ steps of the second translation  are cached in the TLB.
As a result, 
the cost function associated with virtual memory translation is essentially $\ell(|e_{i}-e_{i-1}|) = \log_b(|e_i-e_{i-1}|)$. 
This function is clearly non-negative, non-decreasing and concave, thus, satisfying the requirements of Theorem~\ref{t:main}.

The van Emde Boas layout of an implicit complete binary search tree (for brevity we'll call it {\em vEB tree}) is defined as follows. Given a complete binary search tree on $n$ vertices of height $h = \lceil \log n\rceil$, let $T_0$ be the top subtree of height $\lceil h/2 \rceil$ and let $T_1, \dots, T_{\sqrt{n}}$ be the subtrees rooted at the children of the leaves of $T_0$. Then, vEB tree is defined by placing the subtree $T_0$ in a contiguous portion of an array, immediately followed by $T_1, T_2, \dots, T_{\sqrt{n}}$, with each subtree $T_i$ laid out recursively. Search on vEB tree is known to incur $O(\log_B n)$ cache misses, where $B$ is the block size of the DAM model~\cite{DBLP:journals/talg/FrigoLPR12}.

Consider a machine that is equipped with virtual memory with parameters $U=O(\log n)$ 
and $b=O(1)$. The root-to-leaf traversal of vEB tree recursively traverses $T_0$, jumps by at most $n - \sqrt{n}$ addresses from a leaf of $T_0$ to the root of the appropriate subtree $T_i$
, and recursively traverses $T_i$.  Thus, the cost $\clor(n)$ of the traversal using the above locality of reference function $\loc$ for virtual memory can be computed using the following recurrence:
\[\clor(n) = \begin{cases} 2\cdot \clor(\sqrt{n}) + O(\log n) & \text{ if } n \ge 2 \\
                           O(1) & \text{ otherwise } \end{cases} 
\]
 This solves to $\clor(n) = O(\log n \log\log n)$.
And Theorem~\ref{t:main} implies that this is an asymptotically optimal LoR cost for virtual address translation for the above function $\loc$.

\subsection{Related work.}
The closest work to our presentation here is the Hierarchical Memory Model \cite{DBLP:conf/stoc/AggarwalACS87}. In this model, accessing memory location $x$ takes time $f(x)$. This was extended to a blocked version where accessing $k$ memory locations takes times $f(x)+k$ \cite{DBLP:conf/focs/AggarwalCS87}. In particular the case where $f(x)=\log x$ was studied and optimality obtained for a number of problems. This model, through its use of the memory cost function $f$, bears a number of similarities to ours, and it is meant to represent a multi-level cache where the user manually controls the movement of data from slow to fast memory. However, while it is able to capture temporal coherence well, even in the blocked version it does not capture fully the idea of spatio-temporal locality of reference, where an access is fast because it was close to something accessed recently.

Another model that proposed analyzing algorithm performance on a multi-level memory hierarchy is the Uniform Memory Hierarchy model (UMH)~\cite{DBLP:conf/focs/AlpernCF90}.  The UMH model is a multi-level variation of the DAM that simplifies analysis by assuming that the block size increases by a fixed factor at each level and that there is a uniform ratio between block and memory size at each level of the hierarchy. Unfortunately, this assumption is quite restrictive and doesn't hold in practice on modern hardware.   

\begin{onlymain}
\section{Preliminaries}\label{prelim}
\subsection{Models of computation}\label{uni-optimal}
Let $P$ be a problem, $\mI^P=\{I_1,I_2,\ldots\}$ be a set of valid instances (input sequences) for which problem $P$ can be solved, and $\mI_n^P  = \{I \in P: |I| = n\}$ be the subset of instances of $P$ with input of size $n$. 
Let $E(A,I) = \{e_1, e_2, \ldots\}$ be the sequence of accesses (reads and writes) to memory locations that arises by executing algorithm $A$ on instance $I$, and let $\mA_P = \{A_1, A_2, \ldots\}$ denote the set of all algorithms that correctly solve $P$, i.e., generate a correct output for every instance in $\mI^P$.  
\peyman{I suggest we add this because it is important and I think we need it. The definition of
  optimality, takes care of comparing the optimal algorithm with algorithms that have a hard-coded
  values of $M$ and $B$:\\
  We only consider algorithms that are ``oblivious'' to the model, i.e., the sequence $E(A,I)$ only depends on the 
instant $I$.}
\nodari{This is exactly what cache-oblivious algorithms are. That's why every main theorem mentions that the algorithm is cache-oblivious. But I do agree that we must state explicitly that this is the definition of cache-oblivious algorithms (i.e. define cache-obliviousness).}

\nodari{Talk about equivalence of ideal-cache and DAM models, the LRU model and the LoR model}

The general ideal-cache and LRU-cache models incorporate memory size $M$ and block size $B$ when computing the cost of an execution sequence.  The $\frac{M}{B}$ blocks stored in internal memory make up the \emph{working set}, and we define $\wset_{M,B}(E,i)$ to be the working set after the $i$-th access to the execution sequence $E$ in model $\model$.  
 In the ideal-cache model (and the DAM model) the evictions from the working set $\wi_{M,B}$ are selected such that the total cost of executing $E$ is minimized~\cite{DBLP:journals/talg/FrigoLPR12}, while in the LRU-cache model the evictions from the working set $\wlru_{M,B}$ follow the {\em least recently used} policy~\cite{DBLP:books/daglib/0022093}. 
Let $\cost{\model}{x}(E(A,I))$ denote the cost of executing algorithm $A$ on instance $I$ in model $\model$ with model parameters $x$.  Then the cost of algorithm $A$ on instance $I$ in the ideal-cache or LRU-cache models with cache parameters $M$ and $B$ is 
$$\cost{\model}{M,B}(E(A,I))=\sum\limits_{i=1}^{|E(A,I)|}~\begin{cases} 
0 \text{ if } e_i \in \wset_{M,B}(E(A,I),i-1) \\
1 \text{ otherwise}
 \end{cases}$$ for  $\model \in \{\textsc{co}, \textsc{lru}\}.$
If $E(A,I)$ only depends on the input instance $I$ and not on any of the parameters of the model (e.g.,  $M$ or $B$) then the algorithm is {\em cache-oblivious}.  
Effectively, a cache-oblivious algorithm is one that runs in the classical RAM (with one level of memory). 
On the other hand, if the algorithm explicitly uses multiple levels of memory, or if the access pattern $E(A,I)$ depends on the particular values of $M$ or $B$, then
it is called {\em cache-aware}.
A more rigorous and formal definition of the working set, cache replacement policies, and the cache-oblivious and LRU costs is included in the full version of the paper.
\end{onlymain}
\begin{onlyapp}
\section{Formal definitions of cache-oblivious and LRU cost}\label{formal-CO-LRU}
Analysis of cache-oblivious algorithms assumes $M$ to be the size of internal memory, with $\frac{M}{B}$ blocks being stored in internal memory at a given time, which we call
the \emph{working set}.  The working set is made up of blocks of contiguous memory, each containing $B$ elements.  For a given block size, $B$, we enumerate the blocks of memory by defining the block containing element $e$ as $\bl{B}{e}$ (the $\left \lfloor \frac{e}{B}\right \rfloor$-th block).  Formally, we define the working set \emph{after} the $i$-th access of execution sequence $E$ on a system with memory size $M$, block size $B$, and cache replacement policy $\mP$ (formally defined below) as $\w^{\mP}_{M,B}(E,i)$.  For simplicity of notation, we refer to the working set after the $i$-th access simply as $\w_{i}$ when the other parameters ($M$, $B$, $\mP$, and $E$) are unambiguous.

When we access an element $e_i$, if the block containing $e_i$ is in the working set (i.e., $\bl{B}{e_i} \in \w_{i-1}$), it is a \emph{cache hit} and, in the cache-oblivious model, it has a cost of 0.  However,
if $\bl{B}{e_i}$ is not in the working set, it is a \emph{cache miss}, resulting in a cost of 1.  On a cache miss, the accessed block, $\bl{B}{e_i}$ is loaded into memory, replacing an existing block, which is determined by \emph{cache replacement policy}.  We define a general cache replacement policy as a function that selects the block of the working set to evict when a cache miss occurs, i.e., for memory size $M$ and block size $B$:
\begin{align*}
\mP_{M,B}(E,\w,i) &= \begin{cases}
\bl{B}{e_k} &\text{ if }|\w| = \frac{M}{B} \\& \text{ and } \bl{B}{e_i} \not\in \w \\
\emptyset &\text{ otherwise}
\end{cases}\\
\intertext{where $\w$ is the working set, $e_i$ and $e_k$ are the $i$-th and $k$-th accesses in sequence $E$, respectively, $k<i$, and $\bl{B}{e_k} \in \w$.
For a given cache replacement policy and execution sequence, $E$, we define the working set after access $i \in E$ as}
\w_{i} &= \left( \w_{i-1} \backslash \mP_{M,B}(E,\w_{i-1},i)\right ) \cup \bl{B}{e_i}
\end{align*}
where $\mP_{M,B}(E,\w_{i-1},i)$ defines the block to be evicted and $\bl{B}{e_i}$ is the new block being added to the working set. Since a cache miss results in a cost of 1 and a cache hit has cost 0, the total cost of execution sequence $E$ is simply:
\begin{align*}
C_{M,B,\mP}(E) &= \sum_{i=1}^{|E|} \begin{cases} 
  0 & \text{if } \bl{B}{e_i} \in \w^{\mP}_{M,B}(E,i{-}1) \\
1 & \text{otherwise}\\
 \end{cases} \\
\end{align*}
For this work, we focus on the following cache replacement policies:

\begin{itemize}
\item $\PI{M,B}(E,\w,i)$: The ideal cache replacement policy with internal memory size $M$ and block size $B$.  The number of evictions (and cache misses) over execution sequence $E$ is minimized.  This is equivalent to Belady's algorithm~\cite{Belady1966} that evicts the block $\bl{B}{e_k}$ that is accessed the farthest in the future among all blocks in $\w$.
\item $\PLRU{M,B}(E,\w,i)$: The least recently used (LRU) cache replacement policy with internal memory size $M$ and block size $B$.  The evicted block, $\bl{B}{e_k}$, is the ``least recently used'' block in $\w$.  That is, $\bl{B}{e_k}$ is selected such that no element in $\bl{B}{e_k}$ has been accessed more recently than the most recently accessed element of any other block in $\w$. 
\end{itemize}

We define $\wi_{^{M,B}}(E,i)$ and $\wlru_{^{M,B}}(E,i)$ as the working sets after the $i$-th access of sequence $E$, when using the ideal and LRU cache replacement policies, respectively.  
Thus, the cache-oblivious cost (using the ideal cache replacement policy) of performing the $i$-th access of  on a system with memory size $M$ and block size $B$ is 
$\co{M,B}{E,i}~=~\begin{cases} 
0 \text{ if } \bl{B}{e_i} \in \wi_{^{M,B}}(E,i{-}1) \\
1 \text{ otherwise}\\
 \end{cases}
 $
and the total cost for the entire execution sequence $E$ is
$\co{M,B}{E} = \sum_{i=1}^{|E|} \co{M,B}{E,i}$.  We similarly define the cost with the LRU cache replacement policy for a single access $e_i$ and a total execution sequence $E$ as $\lru{M,B}{E,i}$ and $\lru{M,B}{E}$, respectively.


\begin{theorem}\label{thm:CO-LRU}
For any execution sequence, $E$, memory size $M$, and block size $B$, the total cache misses using the LRU cache replacement policy with a memory twice the size ($2M$) is \emph{2-competitive} with number of cache misses using the ideal cache replacement policy, i.e.,
$\lru{2M,B}{E} \leq 2\cdot \co{M,B}{E}$.
\end{theorem}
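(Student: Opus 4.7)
The plan is to prove the classical 2-competitiveness of LRU when given twice the cache size of OPT, using the phase-based amortized argument of Sleator and Tarjan. Set $k = 2M/B$ as the number of blocks in LRU's enlarged cache and $h = M/B$ as the number of blocks in OPT's cache, so $k - h = h$. I will partition the access sequence $E$ into maximal phases $\Phi_1, \Phi_2, \ldots, \Phi_P$ such that each $\Phi_i$ contains at most $k$ distinct block indices; concretely, $\Phi_1$ begins at the first access, and each subsequent $\Phi_{i+1}$ begins at the first access whose block is the $(k{+}1)$-th distinct block counted from the start of $\Phi_i$. By construction, every phase except possibly the last contains exactly $k$ distinct blocks.

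The key step is to prove two per-phase bounds. First, LRU incurs at most $k$ misses in any phase $\Phi_i$: any block $b$ brought into LRU's cache during $\Phi_i$ must remain there until the end of $\Phi_i$, because if $b$ were evicted on a later access to some block $b'$ inside $\Phi_i$, then $b$, together with the $k - 1$ other blocks currently in LRU's cache (the witnesses to $b$ being the least recently used) and $b'$ itself, would constitute $k + 1$ distinct blocks accessed within $\Phi_i$, contradicting the definition of the phase. Since each of the at most $k$ distinct blocks can therefore cause at most one miss inside a phase, LRU has $\le k$ misses per phase. Second, OPT incurs at least $h$ misses in any complete phase $\Phi_i$: at the start of $\Phi_i$ its cache holds at most $h$ blocks, so among the $k$ distinct blocks accessed in $\Phi_i$ at least $k - h = h$ are absent from OPT's cache at the phase's start and must be loaded, each costing a miss.

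Summing over the $P$ phases gives $\lru{2M,B}{E} \le k \cdot P = 2h \cdot P$ and $\co{M,B}{E} \ge h \cdot P$, yielding $\lru{2M,B}{E} \le 2 \cdot \co{M,B}{E}$. The main obstacle I anticipate is handling the last phase cleanly, since it may contain $\ell < k$ distinct blocks and thereby weaken the OPT bound for that phase. This is resolved by observing that if both caches start empty then the first phase contributes $|\Phi_1|$ OPT misses rather than only $k - h$, and a brief case analysis on $\ell$ (splitting $\ell \le h$ from $\ell > h$) shows the exact factor $2$ is preserved. A secondary subtlety is verifying that the LRU eviction argument bounds only blocks loaded \emph{during} the phase, which follows from tracking $b$'s load time relative to the phase boundary and noting that the $k$ distinct witnesses all lie inside $\Phi_i$.
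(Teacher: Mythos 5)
The paper does not prove this theorem itself; it simply cites the result to Sleator and Tarjan. Your proposal supplies the underlying phase-based amortized argument from that cited work, and it is correct: the phase partition into at most $k = 2M/B$ distinct blocks, the LRU witness argument (a block loaded inside a phase cannot be evicted inside that same phase, since its eviction would exhibit $k+1$ distinct blocks in the phase), and the OPT lower bound of $k - h = M/B$ misses per complete phase all go through, and your closing case analysis on the last phase is what makes the multiplicative factor exactly $2$ with no additive term (valid because both caches start empty). The one small slip is the claim that $\Phi_1$ contributes $\lvert\Phi_1\rvert$ OPT misses; it contributes one miss per \emph{distinct block} in $\Phi_1$ (i.e., $k$ when $\Phi_1$ is complete), not one per access, but since your accounting only uses the distinct-block count this does not affect the conclusion.
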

\begin{proof}
It follows from the work of Sleator and Tarjan~\cite{DBLP:journals/cacm/SleatorT85}.
\end{proof}

\end{onlyapp}
\begin{onlymain}
In Sections~\ref{querytype} and \ref{with-memory} we also define the cost $\clor(E(A,I))$ of an algorithm on instance $I$ in our memoryless and general {\em Locality of Reference (LoR)} models with locality function $\ell$.


Similarly, we define $\wc{\model}{x}(P, A, n) = \max_{I \in \mI_n^P} \{\cost{\model}{x}(E(A, I)\}$ as the worst-case cost of algorithm $A$ on problem instances of size $n$ for problem $P$ on model $\model$ with parameters $x$.

Analysis of both cache-oblivious and cache-aware algorithms relies on additional constraints defining the relationship between the cache parameters $M$ and $B$. For example, the analysis of the cache-aware sorting algorithms~\cite{DBLP:journals/cacm/AggarwalV88} assumes $M \ge 2B$, while the analysis of the cache-oblivious sorting algorithms~\cite{DBLP:journals/talg/FrigoLPR12} and  cache-oblivious sparse matrix dense vector (SpMV) multiplication~\cite{DBLP:journals/mst/BenderBFJV10} assume, respectively, $M \ge B^2$ and $M \ge B^{1+\epsilon}$ -- the so-called {\em tall-cache assumption}. Therefore, let $\mM$ denote the set of all values of $M$ (as a function of $B$) that satisfy such constraints for problem $P$.


\nodari{Old writeup:  \\
An understandable first reflex to these definitions is to say, ``just use the $\Theta$-notation''. 
However, there is no clear way to do so that unambiguously indicates the order of the various quantifiers, given more than one parameter in these definitions.
%
And since our proofs of optimality rely on the precise definition of this notation, we formally state these definitions here. 
}

To prove our results, we need to rigorously define algorithm optimality. 
Otherwise, as there are multiple parameters involved,
the order of the quantifiers would be unclear
and ambiguous. 
\peyman{Made this shorter}

\begin{definition}\label{def:coopt}
A cache-oblivious algorithm $A$ for problem $P$ is asymptotically  {\em CO-optimal} in the ideal-cache model with the cache parameters $B$ and $M \in \mM$ iff\footnote{A standard and understandable reaction to this definition and the math to come is to sneer at the chain of quantifiers and declare ``just write this using asymptotic notation!'' Unfortunately, such notation is not flexible enough to say exactly what we need to say, and thus we make the quantification explicit. }:
\begin{equation*}
\exists_{c>0,n_0>0} \forall_{n>n_0} \forall_{B \ge 1} \forall_{M \in \mM} \forall_{A' \in \mA_P} 
\left[ 
\wcco(P,A,n) \leq  c\cdot 
\wcco(P,A',n) \right]
\end{equation*}
\end{definition}

To implement the cache replacement policy of the ideal-cache model requires the knowledge of what an algorithm will do in the future. Instead, modern hardware caches implement an approximation of the LRU-cache model, each time evicting the block that has been accessed least recently. While it is often easier to analyze the cache misses of an algorithm in the ideal-cache model, in this work we are able to work directly in the LRU-cache model. 

\begin{definition}\label{def:lruopt}
A cache-oblivious algorithm $A$ for problem $P$ is asymptotically  {\em LRU-optimal} in the LRU-cache model with the cache parameters $B$ and $M \in \mM$ iff:
\begin{equation*}
\exists_{c>0,n_0>0} \forall_{n>n_0} \forall_{B \ge 1} \forall_{M \in \mM} \forall_{A' \in \mA_P} 
\left[ 
\wclru(P,A,n) \leq  c\cdot 
\wclru(P,A',n) \right]
\end{equation*}
\end{definition}

\peyman{Maybe add:\\
  Note that due to the order of the quantifiers above, 
  $A'$ is allowed to have the values $M$ and $B$ ``hardcoded'' 
in it, meaning, $A$ cannot be even beaten by such algorithms.}

\peyman{Why not fold both of the above definitions in one:\\
\begin{definition}\label{def:lruopt}
A cache-oblivious algorithm $A$ for problem $P$ is asymptotically optimal in cache model $\model$, with the cache parameters $B$ and $M \in \mM$ iff:
\begin{equation*}
\exists_{c>0,n_0>0} \forall_{n>n_0} \forall_{B \ge 1} \forall_{M \in \mM} \forall_{A' \in \mA_P} 
\left[ 
   \wc{\model}{M,B}(P,A,n) \leq  c\cdot 
\wc{\model}{M,B}(P,A',n) \right]
\end{equation*}
When \model{} is LRU-cache model (resp. ideal-cache model), we say $A$ is LRU-optimal (resp. CO-optimal).
\end{definition}
}
%

For completeness, however, we also prove our results in the ideal-cache model, by utilizing the following well-known resource-augmentation result of Sleator and Tarjan~\cite{DBLP:journals/cacm/SleatorT85} (which also applies to other reasonable cache-replacement policies):

\begin{lemma}\cite{DBLP:journals/cacm/SleatorT85}
Any LRU-optimal algorithm in the LRU-cache model with cache parameters $B$ and $M$ is CO-optimal in the ideal-cache model with cache parameters $B$ and $2M$. 
\end{lemma}

The equivalence in optimality between the ideal-cache and LRU-cache models relies on the cache augmentation, and says nothing about asymptotic equivalence for the same $M$. However, this is not an issue for a large class of natural problems, which can be solved using {\em \cosmooth} cache-oblivious algorithms.

\begin{definition}\label{def:cosmooth}
A cache-oblivious algorithm $A$ is {\em \cosmooth} iff increasing the memory size $M$ by a constant factor does not asymptotically change its execution cost. That is,
\begin{equation*}
    \forall_{I \in \mI^P} \forall_{c>0} \forall_{B\ge 1} \forall_{M \in \mM}
\Big[ \cco(E(A, I)) = \Theta(\cco[c\cdot M](E(A, I))) \Big],
\end{equation*}
where the $\Theta$-notation is with respect to the size of instance $I$.
\end{definition}  

Finally, we define algorithm optimality in our LoR model:
\begin{definition}\label{def:jumpopt}
Let $\locset$ be a class of functions. Algorithm $A$ for problem $P$ is asymptotically {\em LoR-optimal}  with respect to $\locset$ iff: 
\begin{equation*}
\exists_{c>0,n_0>0} \forall_{n>n_0}\forall_{\loc \in \locset}\forall_{A' \in \mA_P} 
{ \Big[ 
\wclor(P,A,n) 
\leq c\cdot 
\wclor(P,A',n) 
\Big ] }
\end{equation*}
\end{definition}

The notion of an \textsc{LoR}-optimal algorithm with respect to {\em all} possible functions would be very powerful, as such an algorithm would be asymptotically optimal on any computing device that rewards locality of reference. In this paper we come very close to achieving such optimality, requiring only a natural set of restrictions on the functions in $\locset$.


\subsection{\texorpdfstring{$B$}{B}-stable problems}\label{sub:bstable}
To show the equivalence between LoR-optimal and CO-optimal algorithms, we must avoid pathological problems with worst-case behavior that varies dramatically with different instances of the problem for different block sizes.  

We say that a problem is \emph{\bstable} if, for any algorithm $A$ that solves $P$, there is some ``worst-case'' instance $\Iw \in \mI_n^P$ that, for every $B$, has CO cost asymptotically no less than the optimal worst-case cost for that $B$, over all instances.  Formally,

\begin{definition}\label{def:bstable}
Problem $P$ is \emph{\bstable} if, for any algorithm $A \in \mA_P$ that solves $P$:
\begin{equation*}
\exists_{c>0,n_0>0} \forall_{n>n_0}\exists_{\Iw \in \mI_n^P} \forall_{B\ge 1} \forall_{M \in \mM}  
\\
 \min_{A' \in \mA_P} \wcco(P,A',n) \le c \cdot \cco(E(A,\Iw)) 
\end{equation*}
\end{definition}
Intuitively, for any algorithm that solves a \bstable problem, there must be a single instance that, for all block sizes, has cost no less than the asymptotically worst-case optimal cost.  
\peyman{Commenting this out for now. We don't seem to need it. \\
Since there are $O(\log^2 n)$ different choices for $M$ and $B$,  we can observe the following
as a corollary.
\begin{lemma}\label{lem:stable}
    Consider a problem $P$, an input size $n\ge n_0$,  
    and an algorithm $A\in \mA_P$.
    Let $B$ and $M \in \mM$ be two fixed values with 
    $W = \min_{A' \in \mA_P} (\wcco(P,A',n))$.
    If $\Pr[W \le c \cdot \cco(E(A,\Iw))]\ge 1- \frac1{2\log^2 n}$, 
    for a uniform random instance $\Iw \in \mI_n^P$,
    then $P$ is \bstable.
\end{lemma}
}

The following lemma, implied by the definition of CO-optimality (Definition~\ref{def:coopt}), states that every algorithm must have an instance on which it performs no better, asymptotically, than the CO-optimal algorithm, for every $B$:

\begin{lemma}
If an asymptotically CO-optimal algorithm $A$ solves a \bstable problem $P$, then
\begin{equation*}
\exists_{c>0,n_0>0} \forall_{n>n_0}\forall_{A' \in \mA_P}\exists_{\Iw \in \mI_n^P} \forall_{B\ge 1} \forall_{M \in \mM} \\
\Big [ \wcco(P,A,n) \le c \cdot \cco(E(A',\Iw))  \Big ]
\end{equation*}
\end{lemma}
In the full version of the paper 
we 
prove the following lemma, which shows existence of non-\bstable problems, for which our main result (Theorem~\ref{t:main}) does not hold. This justifies our classification and exclusion of these pathological cases.

\begin{lemma}\label{lemma:needbstable}
There exists a problem $P$ which is not \bstable and which has a CO-optimal algorithm which is not LoR-optimal. 
\end{lemma}

\nodari{I moved the lemma on identifying the B-stable problems to the appendix (in file why-b-stable.tex).
If we decide to remove it, remember to update the appendix.}
\subsection{\texorpdfstring{$B$}{B}-smoothed analysis}

To prove our results, we will work with the \emph{smoothed} version of the analysis. Given the access sequence $E = E(A,I)$ and an integer $s$ chosen uniformly at random in the range $[0,B)$, let $E_{B\text{-smooth}}$ be the sequence derived from $E$ where every element of $E$ is increased by $s$. Then, the {\em $B$-smoothed} costs are defined as the expected cost on the $B$-smooth sequence in the respective cache models, i.e.,  $\ccoshift{E} = \E{\cco(E_{B\text{-smooth}}}$ and $\lrushift{M,B}{E} = \E{\lru{M,B}{E_{B\text{-smooth}}}}$. 

\begin{lemma}\label{co-sco}
For any execution sequence $E$ 
$$\ccoshift{E} \le 2 \cdot \cco(E) \le 4\cdot \ccoshift{E},$$  and 
$$\lrushift{M,B}{E} \le 2\cdot \lru{M,B}{E} \le 4\cdot \lrushift{M,B}{E}.$$
\end{lemma} 
\begin{proof}
Shifting the execution sequence may cause accesses that were in the same block to become in two neighboring blocks, and accesses that were in two neighboring blocks to become in the same block. Thus, the cost may grow or shrink by a factor of two, but not more.
\end{proof}

\end{onlymain}

\begin{onlymain}

\section{Memoryless algorithms}\label{querytype}

We begin with the simplest case, namely, the {\em memoryless cache model (MCM)} where
the internal memory is just a single block, i.e. $\mM = \{ B \}$.\footnote{While this might seem
overly restrictive, a number of query algorithms on
cache-oblivious data structures are CO-optimal, because one is always
traversing further into the structure, and never comes back to revisit parts of
memory near where it has already gone.}
Note that in this case, there is no need to differentiate between LRU-cache model
and ideal-cache model because 
the working sets in both cache models, after accessing $e_i \in E(A,I)$, consist of a single block that contains $e_i$.
Thus,
the costs of the algorithm $A$ on instance $I$ in the MCM
becomes 
\begin{align*}
    \cco[B](E(A,I)) &= \clru[B](E(A,I))\\ &= \sum_{i=1}^{|E(A,I)|} \begin{cases}
 1 &\text{if } \lfloor \frac{e_{i}}{B}\rfloor \neq \lfloor \frac{e_{i-1}}{B} \rfloor \\
 0 &\text{otherwise.}
 \end{cases}
\end{align*}

This cost rewards spatial locality. Hence, in the LoR model it is natural to define the locality function $\loc$ to measure the cost of executing the sequence $E(A,I)$ as a function of the \emph{spatial distance} $|e_i- e_{i-1}|$ between accesses:
\[\jr{\loc}{E(A,I)}= \sum_{i=1}^{|E(A,I)|}\loc(|e_i-e_{i-1}|)\]

Let $\locset$ denote the set of all non-negative, non-decreasing, concave functions $\loc: \N \rightarrow \R$. 
Even though $\locset$ encompasses a wide range of (arbitrarily complicated) functions, we will show that any cache-oblivious algorithm is LoR-optimal with respect to $\locset$ if and only if it is CO-optimal in the MCM.

\subsection{Main result.}

Let $\loc_B(d) = \min\left(1,\frac{d}{B}\right)$. 
We begin by proving our result for this {\em specific} locality function $\loc_B \in \locset$ and generalize it to any $\loc \in \locset$ later. 

\begin{lemma}\label{CO-jb}
For any execution sequence $E$ and block size $B$,
$\ccoshift[B]{E} = \jr{\loc_B}{E}$.
\end{lemma}
\begin{proof}
Consider a single access $e_i \in E$. Let $d=|e_{i}-e_{i-1}|$. If $d\geq B$ then $\loc_B(d)=1$, and the $i$th term of $\cco[B](E)$ is also 1 because 
$\lfloor \frac{e_{i}+s}{B}\rfloor \neq \lfloor \frac{e_{i-1}+s}{B} \rfloor$ for all $0\leq s < B$.
If $d<B$ then $\loc_B(d)=\frac{d}{B}$, and the expected value of the $i$th term of $\cco[B](E)$ is also $\frac{d}{B}$, because 
$\lfloor \frac{e_{i}+s}{B}\rfloor \neq \lfloor \frac{e_{i-1}+s}{B} \rfloor$ for $d$ of the possible shifts $s \in [0,B)$.
 \end{proof}

\begin{corollary}\label{J_Implies_CO}
If a cache-oblivious algorithm $A$ for problem $P$ is asymptotically LoR-optimal with respect to $\locset$, then it is asymptotically CO-optimal in the MCM. 
\end{corollary}
\begin{proof}
Since $A$ is LoR-optimal, then it is within a constant factor of optimal for all locality functions, including $\loc_B(d) = \min\left(1,\frac{d}{B}\right)$. The corollary follows from Lemmas~\ref{CO-jb} and \ref{co-sco}.
\end{proof}
\end{onlymain}

\begin{onlymain}
We now show that \emph{any} locality function $\loc \in \locset$ can be represented as a linear combination of $N$ functions $\loc_B(d)$, for $N$ various values of $B$.

\begin{restatable}{lem}{lincomb}\label{lin_comb}
For every locality function $\loc \in \locset$ there
exist non-negative constants
$\alpha_1, \alpha_2 \ldots \alpha_N$
and
$\beta_1, \beta_2 \ldots \beta_N$
such that 
$
\loc(d) = \sum_{i=1}^N \alpha_i \loc_{\beta_i}(d)
$
for integers $d$ in $[1..N]$. 
\end{restatable}
\begin{proof}[Proof (Sketch)]
Let 
$\gamma_i= \begin{cases}
            2\loc(i) - \loc(i+1) & \text{ if } i = 1\\
            2\loc(i) - \loc(i+1) - \loc(i-1) & \text{ if } 2 \le i \le N-1 \\
            \loc(i) - \loc(i-1) & \text{ if } i = N
            \end{cases} $ 
and let $\alpha_i=i\gamma_i$ and 
$\beta_i=i$.   Then for any integer $x \in [1..N]$:
$
\sum_{i=1}^N \alpha_i \loc_{\beta_i}(x) = 
\sum_{i=1}^N \min\lrp{\alpha_i,\frac{\alpha_ix}{\beta_i}} 
= 
\sum_{i=1}^N \min(i\gamma_i,\gamma_i x) 
= 
\sum_{i=1}^{x-1} i\gamma_i + \gamma_x x + \sum_{i=x+1}^N \gamma_i x
$ and this expression telescopes to $l(x)$ (the full derivation can be found in the full version of the paper. 

Since function $\loc$ is non-negative and concave, all values of $\alpha_i$ and $\beta_i$ are non-negative.
\end{proof}
\end{onlymain}

\begin{onlyproof}
\subsection{Proof of Lemma~\ref{lin_comb}} \label{pf:lin_comp}
\lincomb*
\begin{proof}
Let 
\[\gamma_i= \begin{cases}
            2\loc(i) - \loc(i+1) & \text{ if } i = 1\\
            2\loc(i) - \loc(i+1) - \loc(i-1) & \text{ if } 2 \le i \le N-1 \\
            \loc(i) - \loc(i-1) & \text{ if } i = N
            \end{cases} \] 
and let $\alpha_i=i\gamma_i$ and 
$\beta_i=i$.   Since $\loc$ is a non-negative and concave, all $\gamma_i$ values are non-negative, and, consequently, all $\alpha_i$ and $\beta_i$ values are also non-negative.

Thus:
\begin{align*}
\sum_{i=1}^N \alpha_i \loc_{\beta_i}(x)
&=
\sum_{i=1}^N \min\lrp{\alpha_i,\frac{\alpha_ix}{\beta_i}}
\\
&=
\sum_{i=1}^N \min(i\gamma_i,\gamma_i x)
\\
%
&=\overbrace{\sum_{i=1}^{x-1} i\gamma_i}^{A} + \overbrace{\gamma_x x}^{B} + \overbrace{\sum_{i=x+1}^N \gamma_i x}^{C} 
\intertext{We first simplify the $A$ term, which gives us}
A &= 2\loc(1) - \loc(2) \\&\text{\quad\quad} + \sum_{i=2}^{x-1} \Big [2\loc(i)i - \loc(i-1)i - \loc(i+1)i \Big ]\\
&= \sum_{i=1}^{x-1}2\loc(i)i - \sum_{i=1}^{x-2}\loc(i)(i+1) \\& \text{\quad\quad} - \sum_{i=2}^{x}\loc(i)(i-1) \\
&= \sum_{i=2}^{x-2}\Big[ \loc(i)(2i-(i-1)-(i+1))\Big ] \\
& \text{\quad\quad}+ 2\loc(1) + 2\loc(x-1)(x-1) \\
& \text{\quad\quad}- 2\loc(1) - \loc(x-1)(x-2) \\& \text{\quad\quad}- \loc(x)(x-1)  \\
&= \loc(x-1)(2x-2-x+2) - \loc(x)(x-1) \\
&= \loc(x-1)x - \loc(x)(x-1)
\intertext{We now simplify the $C$ term from above}
C &= \sum_{i=x+1}^{N}\gamma_i x \\
&= \sum_{i=x+1}^{N-1}\Big[2\loc(i)x - \loc(i+1)x - \loc(i-1)x \Big ]  
\\& \text{\quad\quad}+ \loc(N)x-\loc(N-1)x \\
&= \left(\sum_{i=x+1}^{N-1} 2\loc(i)x - \sum_{i=x+2}^{N} \loc(i)x - \sum_{i=x}^{N-2} \loc(i)x \right) 
\\& \text{\quad\quad}+ \loc(N)x-\loc(N-1)x \\
&= \sum_{i=x+1}^{N} 2\loc(i)x - \sum_{i=x+2}^{N} \loc(i)x - \sum_{i=x}^{N} \loc(i)x \\
&= 2\loc(x+1)x\\& \text{\quad\quad} + \left(\sum_{i=x+2}^N 2\loc(i)x  - \loc(i)x - \loc(i)x \right)
\\& \text{\quad\quad}- \loc(x)x-\loc(x+1)x \\
&= \loc(x+1)x - \loc(x)x
\intertext{Combining the simplified terms, we get}
\sum_{i=1}^N \alpha_i \loc_{\beta_i}(x) &= \overbrace{\loc(x-1)x - \loc(x)(x-1)}^{A} + \overbrace{\gamma_x x}^{B}  \\& \text{\quad\quad}+\underbrace{\loc(x+1)x - \loc(x)x}_{C} \\
&= \Big(\loc(x-1)x - \loc(x)(x-1)\Big) \\& \text{\quad\quad} + \Big( 2\loc(x)x - \loc(x+1)x - \loc(x-1)x \Big) \\
&\text{\quad\quad} + \Big(\loc(x+1)x - \loc(x)x\Big) \\
&= - \loc(x)(x-1) + 2\loc(x)x - \loc(x)x  \\
&= \loc(x) 
\end{align*}
\end{proof}
\end{onlyproof}

\begin{onlymain}
\begin{restatable}{coro}{lincombsum}\label{lin_comb_sum}
For every locality function $\loc \in \locset $ there
exists a sets of $n$ non-negative constants $\alpha_{1}, \alpha_{2} \ldots \alpha_{n}$, and
$\beta_{1}, \beta_{2} \ldots \beta_{n}$ such that, for any execution sequence $E$, 
$
\sum_{i=1}^{n} \alpha_{i} \jr{\loc_{\beta_{i}}}{E} = \jr{\loc}{E}.
$
\end{restatable}
\end{onlymain}

\begin{onlymain}
\begin{restatable}{theorem}{cojump}\label{t:main}
Let $\locset$ be a set of all non-negative, non-decreasing, concave functions $l: \N \rightarrow \R$. Any cache-oblivious algorithm $A$ that solves a \bstable problem $P$ is LoR-optimal with respect to $\locset$ if and only if it is CO-optimal in the memoryless cache model.
\end{restatable}
\end{onlymain}
 
\begin{onlymain}
\begin{proof}
  The first direction follows from Corollary~\ref{J_Implies_CO}. To prove the other direction,
  consider the CO-optimal algorithm $A_\text{CO-Opt}$ and some algorithm $A$ that solves $P$.  Since $P$ is \bstable, by Definition~\ref{def:bstable},
\begin{align*}
\exists_{c,n_0}\forall_{n>n_0}\exists_{\Iw \in \mI_n^P}\forall_{B\ge1} &  \Big [
W_B(P,A_{\text{CO-Opt}},n) \leq c \cdot \cco[B](E(A,\Iw)) \Big ]
\intertext{Using the definition of the worst-case cost $W_B$, we get}
\exists_{c,n_0}\forall_{n>n_0}\exists_{\Iw \in \mI_n^P}\forall_{B\ge1} &  \Big [
\max_{I \in \mI^P_n} \cco[B](E(A_{\text{CO-Opt}},I))
\leq c \cdot 
\cco[B](E(A,\Iw)) \Big ]
\intertext{By Lemma~\ref{co-sco}, we get}
\exists_{c,n_0}\forall_{n>n_0}\exists_{\Iw \in \mI_n^P}\forall_{B\ge1} &  \Big [
\max_{I \in \mI_n^P}\ccoshift[B]{E(A_{\text{CO-Opt}},I)} \le 4c \cdot \ccoshift[B]{E(A,\Iw)} \Big ]
\intertext{and since, by Lemma~\ref{CO-jb}, for any $B$ the smoothed CO cost is equivalent to the LoR cost with the corresponding $\loc_B$ function,}
\exists_{c,n_0}\forall_{n>n_0}\exists_{\Iw \in \mI_n^P}\forall_{B\ge1} & \Big [
\max_{I \in \mI_n^P}\jr{\loc_B}{E(A_{\text{CO-Opt}},I)} \le 4c \cdot \jr{\loc_B}{E(A,\Iw)} \Big ]
\intertext{This inequality holds for all $B$ and thus all linear combinations of various $B$. For any locality function $\loc$ in the set of valid locality functions, $\locset$, consider $\alpha^\loc_1, \alpha^\loc_2, \ldots, \alpha^\loc_n$ and $\beta^\loc_1, \beta^\loc_2, \ldots, \beta^\loc_n$ given by Lemma~\ref{lin_comb}. We use the $\beta$'s as the $B$ values and the $\alpha$'s as the coefficients in the linear combination to get}
\exists_{c',n_0}\forall_{n>n_0}\exists_{\Iw \in \mI_n^P}\forall_{\loc \in \locset} 
& \text{\quad}
\sum_{k=1}^{n} \max_{I \in \mI_n^P} \Big( \alpha^{\loc}_k \jr{\loc_{\beta^{\loc}_k}}{E(A_{\text{CO-Opt}},I)}\Big)
\le c'  \cdot
\sum_{k=1}^{n} \Big( \alpha^{\loc}_k \jr{\loc_{\beta^{\loc}_k}}{E(A,\Iw)} \Big ) 
\intertext{$\Iw$ is a single instance of $P$, therefore it cannot have a greater total cost than the single instance that maximizes the cost}
\exists_{c',n_0}\forall_{n>n_0}\forall_{\loc \in \locset} 
&
\text{\quad} 
\sum_{k=1}^{n} \max_{I \in \mI_n^P} \Big (\alpha^{\loc}_k \jr{\loc_{\beta^{\loc}_k}}{E(A_{\text{CO-Opt}},I)} \Big )
\le c' \cdot
\max_{I \in \mI_n^P}\sum_{k=1}^{n}\Big( \alpha^{\loc}_k \jr{\loc_{\beta^{\loc}_k}}{E(A,I)} \Big)
\intertext{Moving the max outside the summation can only decrease the overall cost of the left side of the inequality, thus}
\exists_{c',n_0}\forall_{n>n_0}\forall_{\loc \in \locset} 
&
\text{\quad} \max_{I \in \mI_n^P} \sum_{k=1}^{n}\Big (\alpha^{\loc}_k \jr{\loc_{\beta^{\loc}_k}}{E(A_{\text{CO-Opt}},I)} \Big )
\le c' \cdot
\max_{I \in \mI_n^P}\sum_{k=1}^{n}\Big( \alpha^{\loc}_k \jr{\loc_{\beta^{\loc}_k}}{E(A,I)} \Big)
\intertext{Using Corollary~\ref{lin_comb_sum}, we get}
\exists_{c',n_0} \forall_{n>n_0}\forall_{\loc \in \locset} 
 &\text{\quad}
\max_{I \in \mI^P_n} ( \jr{\loc}{E(A_{\text{CO-Opt}},I)})
\leq c' \cdot
\max_{I \in \mI^P_n} (\jr{\loc}{E(A,I)}) 
\\
\intertext{We are considering an arbitrary algorithm $A$ that solves $P$, so this applies to all $A \in \mA_P$.  By the definition of worst-case LoR cost,}
\exists_{c',n_0} \forall_{n>n_0}\forall_{\loc \in \locset} \forall_{A \in \mA_P}  
&\text{\quad}\Big [ 
W_{\loc}(P,A_{\text{CO-Opt}},n) 
\leq c' \cdot
W_{\loc}(P,A,n) \Big ]
\end{align*}
Thus, by Definition~\ref{def:jumpopt}, $A_{\text{CO-Opt}}$ is asymptotically LoR-optimal.
\end{proof}
\end{onlymain}

\begin{onlymain}

\section{General models for algorithms with memory}\label{with-memory}
In the previous section, we considered execution sequences that did not utilize more than one block of memory, and thus locality to other than the previously accessed memory location was irrelevant. Now we generalize the model and apply it to execution sequences without this restriction.  This requires that we consider the size and contents of internal memory when computing the expected cost of an access.  






\subsection{General LoR model}\label{bivariate-jump}
To capture the concept of the working set for algorithms that use internal memory, we define bidimensional locality functions that compute LoR cost based on two dimensions: distance and time.  This bidimensional locality function, $\loc(d,\delta)$ represents the cost of a jump from a \emph{source} element, $s$, to a \emph{target} element, $t$, where $d$ and $\delta$ are the \emph{spatial distance} and \emph{temporal distance}, respectively, between $s$ and $t$.  This captures the concept of the working set by using ``time'' to determine if the source element is in memory or not. If the source is temporally close (was accessed recently) and spatially close to the target, $t$, the resulting locality cost of the jump is small.

\para{Details of the bidimensional locality functions.}
Let $\blocset$ denote the set of {\em bidimensional locality functions} that we consider. 
The functions in this set have the following properties. 
An element of this set is a function of the form $\loc(d, \delta) = \max(f(d), g(\delta))$,
where $f(d)$ is non-negative, non-decreasing, and concave, while $g(\delta)$ is a 0-1 threshold function, i.e., 
\[
g(\delta) = g_x(\delta) = \begin{cases}
0  \text{ if } 0\leq \delta \leq x \\
1  \text{ otherwise}
\end{cases}
\]
for some value $x$. 
For any $k \leq i$, the bidimensional locality cost of a jump from source
element $e_k$ to target $e_i$ in the sequence $E$ is $\loc(|e_i -
e_k|, t(E,i) - t(E,k))$, where $t(E,i)$ is the \emph{time} of the $i$-th
access.  For simplicity of notation, we define $\dist{E,k,i}$ to be the
temporal distance between the $i$-th access ($e_i$) and $k$-th access ($e_k$),
i.e., $\dist{E,k,i} = t(E,i)-t(E,k)$.  Intuitively, we can think of
$\dist{E,k,i}$ as the time from access $e_k$ to ``present'' when accessing
$e_i$.   
In addition, we require that the
functions cannot be more ``sensitive'' to temporal locality than
spatial locality, i.e., for any locality function
$\loc(d,\delta)=\max(f(d),g(\delta))$, we have that $\forall_x [ f(x) \geq g(x) ]$.
This corresponds to the \emph{tall cache assumption} $M \ge
B^2$, which is typically used in the analysis of cache-oblivious
algorithms~\cite{DBLP:conf/stoc/BrodalF03,DBLP:journals/tcs/Silvestri08}.
Therefore, we restrict the machine parameters, $M$ and $B$, to all values $B
\ge 1$ and  $\mM = \{M : M \ge B^2\}$.
A more in-depth discussion of the tall cache assumption and how it relates to the LoR model can be found in the vull version of the paper. 
\end{onlymain}
\begin{onlyapp}
\section{On the tall cache assumption}\label{tall-cache}
Cache-oblivious algorithms are analyzed for memory size $M$ and block size $B$ and the tall cache assumption simply assumes that $M \ge B^2$.  This assumption is required by many cache-obliviously optimal algorithms because many require that at least $B$ \emph{blocks} can be loaded into internal memory at a time.  It has been proven that without the tall cache assumption, one cannot achieve cache-oblivious optimality for several fundamental problems, including matrix transposition~\cite{DBLP:journals/tcs/Silvestri08} and comparison-based sorting~\cite{DBLP:conf/stoc/BrodalF03}.  Thus, we consider how this assumption is reflected in the LoR model, and whether we can gain insight into the underlying need for the tall cache assumption.

Recall that our class of bidimensional locality functions are of the form $\loc(d,\delta)=\max(f(d),g(\delta))$, where $f$ is subadditive and $g$ is a 0-1 threshold function.  In Section~\ref{sec:jump-CO} we define the locality function that corresponds to a memory system with memory size $M$ and block size $B$ to be \begin{align*}
\loc_{M,B}(d,\delta)&=\max\left (\min\left(1, \frac{d}{B}\right), \min\left(1,  \left\lfloor\frac{\delta}{M/B}\right\rfloor \right) \right)
\end{align*}
thus, for this function, $f(d)=\min\left(1, \frac{d}{B}\right)$ and $g(\delta)=\min\left(1,  \left\lfloor\frac{\delta}{M/B}\right\rfloor \right)$.  The tall cache assumption states that $M \ge B^2$, or $\frac{M}{B} > B$.   This is reflected in our locality function as the requirement that $\forall_{k\ge 0} [f(k) \ge g(k)]$.  This restriction between $f$ and $g$ implies that $\loc$ cannot be more ``sensitive'' to temporal locality than spatial locality.  That is, the LoR cost when spatial and temporal distance are equal will be computed from the spatial distance (i.e., $\loc(d,\delta) = f(d)$ if $d\ge\delta$).  Additionally, this implies that $\loc(x,x)$ is subadditive.  Intuitively, this tells us that, with the tall cache assumption, any algorithm that balances spatial and temporal locality of reference will not have performance limited by temporal locality.  Many cache-obliviously optimal algorithms aim to balance spatial and temporal locality, thus requiring the tall cache assumption to achieve optimality. 
\end{onlyapp}
\begin{onlymain}

We form our definition of time based on the amount of change that occurs to the working set.  For example, if an access causes a block of $B$ elements to be evicted, we say that time increases by 1.  Thus, time depends on the locality function, so we define the time of the $i$-th access of $E$, for the given locality function $\loc$ to be $\tj{\loc}{E,i} = \sum_{k=1}^{i-1} (\jr{\loc}{E,k})$.  That is, the time of access $e_i \in E$ is simply the sum of costs of all accesses prior to $e_i$ in sequence $E$.   We note that the time after the last access of $E$ is the total LoR cost (i.e., $\jr{\loc}{E} = \tj{\loc}{E,|E|+1}$).

Unlike the memoryless LoR cost, we cannot simply compute the cost of access $e_i$ using the distance from the previous access, $e_{i-1}$, since any of the prior accesses may be in the working set when accessing $e_i$.  Furthermore, since we no longer consider only non-decreasing execution sequences, when accessing $e_i$, there may be accesses to both the \emph{left} and \emph{right} that could be in the same block as $e_i$.   Therefore, computing $\jr{\loc}{E,i}$ using the locality function from a single source is insufficient to capture the idea of the working set, and a detailed example showing why this is the case is included in the full version of the paper. 
We define the general LoR cost of access $e_i \in E$ as $\jr{\loc}{E, i} = $
$$
 \max\left(
\left(\begin{aligned} 
&\overbrace{\min_{\substack{\forall L<i \text{ s.t.} \\ e_L \leq e_i}} \loc(e_i-e_L, \dist{E,i,L})}^{\text{left side}} 
\\
&\qquad\qquad\qquad + 
\\
&\underbrace{\min_{\substack{\forall R<i \text{ s.t.} \\ e_R \geq e_i}} \loc(e_R-e_i, \dist{E,i,R})}_{\text{right side}}
\end{aligned}\right) -1,0
\right).
$$
Intuitively, the LoR cost of access $e_i \in E$ is computed from the minimum cost jumps from both the left side and right side of $e_i$.  We note that this generalizes the LoR cost definition of the memoryless setting (Section~\ref{querytype}), as the locality function from source $e_R$ will always evaluate to 1 for non-decreasing accesses.  
\nodari{We aren't talking about ordered accesses anymore.}

This formulation has the added benefit that it lets us easily visualize an execution sequence in a graphical representation, illustrated in Figure~\ref{fig:visualize-smooth}.   
We consider a series
of accesses in execution sequence $E$ as points in a 2-dimensional plane.  The point representing access $e_i$ is plotted with the $x$ and $y$ coordinates corresponding to the spatial position, $e_i$, and the temporal position, $t(E,i)$, respectively.  
The cost of access $e_i$ is simply computed from the LoR cost with sources $e_L$ and $e_R$ (the previous access with the minimum locality function cost to the left and right, respectively).  We can visually determine which previous accesses correspond to $e_L$ and $e_R$: if a previous access is outside the gray region (i.e., $\delta > \frac{M}{B}$ or $d > B$), the cost is 1.  Otherwise, it is simply $\frac{d}{B}$.  

\begin{figure}[tbh]
\center
\includegraphics{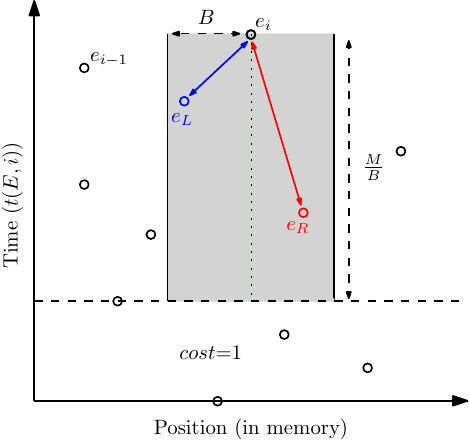}
\caption{Graphical visualization of accesses in the general LoR model with locality function $\loc_{M,B}$.  At time of access $e_i$, all prior accesses within the past time $\frac{M}{B}$ (above the dashed horizontal line) have $g(\delta) = 0$.   The locality cost to jump from any element outside the gray region has the maximum cost of 1.  In this example, there is both an $e_L$ and $e_R$ with cost $<1$.}
\label{fig:visualize-smooth}
\end{figure}


\end{onlymain}
\begin{onlyapp}
\section{A single LoR source does not represent the working set}\label{no-onefinger}

In this section, we show that computing the general LoR cost using only a single source (with the minimum cost) is insufficient to represent the working set.  Specifically, we show the potential discrepancy between such a formulation of LoR cost and the smooth LRU cost.  
Informally, we show two execution sequences where the sequence of distances from the closest previous accessed object is given by:
$$
B,\frac{B}{2},\frac{B}{4},\frac{B}{4},\overbrace{\frac{B}{8},\ldots, \frac{B}{8}}^{\text{4 times}},
\overbrace{\frac{B}{16},\ldots, \frac{B}{16}}^{\text{8 times}},\ldots, \overbrace{1,\dots, 1}^{\frac{B}{2}\text{ times}}
$$
but all the accesses in the first sequence lie within one block, while in the second, $\frac{1}{2}\log B$ blocks are accessed. This shows that the distance from the closest previous access by itself can not characterize the runtime.

We formally define this single-source definition of LoR cost of accessing $e_i$ as 
$\jrh{\loc}{E,i} = \min_{k=1}^{i-1}\loc(|e_i - e_k|, \dist{E,k,i})$.  
To show the discrepancy between this formulation and the LRU cost, we consider the specific locality function that corresponds to the LRU cost for a specific memory size $M$ and block size $B$: $\loc_{M,B}(d,\delta) = \max\left(\min\left(1,\frac{d}{B}\right), \min\left(1, \left\lfloor\frac{\delta}{M/B}\right\rfloor\right)\right)$.

Given an array of elements, $a$, located in contiguous memory, we define $a[i]$ as the $i$-th element in array $a$.  Consider an execution sequence $E$ that first accesses $a[0]$ and $a[B]$, then performs a series of \emph{stages} of accesses of elements within the range $[0,B]$.  At the first stage, $a[B/2]$ is accessed.  At the second stage, $a[B/4]$ and $a[3B/4]$ are accessed.  At stage 3, $a[B/8]$, $a[3B/8]$, $a[5B/8]$ and $a[7B/8]$ are accessed, and so on for $\log{B}$ stages.  By tall cache, we know that $M>2B$, so at any stage $k$, the blocks containing elements accessed during the previous stage are in the working set.  Thus, the LRU cost of execution sequence $E$ is $\lrushift{M,B}{E} =$
$$
\overbrace{\sum_{k=0}^{\log{B}-1}}^{\text{stages}} \overbrace{\sum_{i=0}^{2^k-1}}^{\text{accesses}} \frac{1}{B}\sum_{s=0}^{B-1}\begin{cases}
0 & \text{ if }\left \{\frac{(2i+1)\cdot B}{2^{k{+}1}}{+} s\right \}_B \\& \text{\quad is}   \left\{ \frac{2i\cdot B}{2^{k{+}1}}{+}s\right\}_B \\& \text{\quad or }  \left\{\frac{(2i+2)\cdot B}{2^{k+1}}{+}s\right \}_B \\
1 & \text{ otherwise}
\end{cases}
$$
For every access, all elements element between $a[0]$ and $a[B]$ will always be in the working set, for every shift value $s$.  Thus, the cost is 0 at each stage after the first two accesses, so $\lrushift{M,B}{E} =2$.

The single-source LoR cost, $\jrh{\loc_{M,B}}{E}$, however, depends only on the single access in the working set with the smallest spatial distance (i.e., minimum LoR cost).  Since, at each stage, the accesses from the previous stage have temporal distance $<\frac{M}{B}$, the temporal component of the locality function is always 0 and $\frac{d}{B}$ dictates the cost of each access.  At each stage, the spatial distance, $d$, decreases by a factor 2, thus
\begin{align*}
\jrh{\loc_{M,B}}{E} &= \overbrace{\sum_{k=0}^{\log{B}-1}}^{\text{stages}} \overbrace{\sum_{i=0}^{2^k-1}}^{\text{accesses}} \loc_{M,B}\left(\frac{B}{2^{k+1}}, 1\right) \\
&= \sum_{k=0}^{\log{B}-1} \left ( 2^k \cdot \loc_{M,B}\left(\frac{B}{2^{k+1}}, 1\right)\right)
\intertext{using the locality function, $\loc_{M,B}(d, \delta)$ defined above, we get}
\jrh{\loc_{M,B}}{E} &= \sum_{k=0}^{\log{B}-1} \left ( 2^k \cdot \frac{B/2^{k+1}}{B} \right ) \\
&= \sum_{k=0}^{\log{B}-1} \frac{1}{2} \\
&= \frac{\log{B}}{2}
\end{align*} 
Thus, the single-source cost formulation does not generalize the LRU cost, while using two sources does (as we prove in Lemma~\ref{equiv-LR}).
\end{onlyapp}
\begin{onlymain}

\subsection{Equivalence to cache-oblivious cost}\label{sec:jump-CO}
As with the memoryless LoR model, we first prove our result for a specific bidimensional locality function, $\loc_{M,B}(d,\delta)~=~\max\left (\min\left(1, \frac{d}{B}\right), \min\left(1,  \left\lfloor\frac{\delta}{M/B}\right\rfloor \right) \right)$, and later generalize our result to all bidimensional locality function in $\blocset$.  


\begin{restatable}{thrm}{equivLR}\label{equiv-LR}
For any cache-oblivious algorithm $A$, $B \ge 1$, $M \ge B^2$, execution sequence $E$ generated by $A$, and $\loc_{M,B}(d,\delta)~=~\max\left (\min\left(1, \frac{d}{B}\right), \min\left(1,  \left\lfloor\frac{\delta}{M/B}\right\rfloor \right) \right)$: 
$\jlrk{E}{M,B} = \lrushift{M,B}{E}$ and, consequently,
$\ccoshift{E} \leq \jlrk{E}{M,B} \leq 2\cdot \ccoshift[\frac{M}{2}]{E}$. \nodari{If there is space, move the second equation into a separate corollary.}
\end{restatable}



\begin{proof}
To prove that $\jlrk{E}{M,B} = \lrushift{M,B}{E}$, 
we consider the cost of performing access $e_i \in E$.  Assume that, when accessing $e_i$, $e_L$ is the nearest element to the \emph{left} of $e_i$ ($e_L \leq e_i$) that is in the working set, i.e., $L<i$, $e_L \in \wlru_{^{M,B}}(E,i-1)$, and $e_i-e_L$ is minimized.  If there is no such element to the left of $e_i$ in the working set, then we say that $e_L = -\infty$.   Similarly, assume that $e_R$ is the nearest element in $\wlru_{^{M,B}}(E,i-1)$ to the \emph{right} of $e_i$ ($e_i \leq e_R$), and if there is no such element, then $e_R= +\infty$.  By this definition, for any access $e_i$, we can simply consider the spatial components, because, if no element is within temporal distance $\frac{M}{B}$, the spatial distance is $\infty$ and the $\loc_{M,B}$ cost is 1.

We consider three 
possible cases for the spatial distance of $e_L$ and $e_R$ from access $e_i$: \\

\noindent
\textbf{Case 1: }$(e_i-e_L) \ge B$ AND $(e_R-e_i)\ge B$

There is no element in the working set within distance $B$ of $e_i$, then, for all alignment shifts, $0 \leq s < B$, we know that $e_i \not\in \wlru_{M,B}(E,i-1)$.  Thus,
\begin{align*}
\lrushift{M,B}{E,i} &= \E{\co{M,B}{E_{B-\text{smooth}},i}} \\
&= \frac{1}{B}\sum_{s=0}^{B-1} 1 \\
&= 1
\intertext{and the LoR model cost is}
\jlrk{E,i}{M,B} &= \max( \loc_{M,B}(e_i-e_L, \dist{E,i,L}) \\&\text{\quad}+ \loc_{M,B}(e_R - e_i, \dist{E,i,R}) - 1, 0 ) \\
&= \max( 1 + 1 - 1, 0) = 1
\end{align*} 
We note that this includes the cases where $e_L$ and/or $e_R$ do not exist, since, we set $e_L = -\infty$ and/or $e_R=\infty$, respectively, in such cases. \\


\noindent
\textbf{Case 2: }$(e_i - e_L) < B$ OR $(e_R-e_i) < B$

Only one side (left or right) is within distance $B$ of $e_i$.  W.l.o.g, assume that $(e_R - e_i) < B$ and $(e_i - e_L) \ge B$.  Since $(e_i - e_L) \ge B$, for all shifts $0\leq s < B-1$, we know that $e_L \not\in \wlru_{M,B}(E,i-1)$.  Thus, the smoothed LRU cost is simply
\begin{align*}
\lrushift{M,B}{E,i}  &= \frac{1}{B}\sum_{s=0}^{B-1} \begin{cases}
1 & \text{ if } \lfloor\frac{e_R+s}{B}\rfloor \neq \lfloor\frac{e_i+s}{B}\rfloor \\
0 & \text{ otherwise}
\end{cases}\\
&= \frac{e_R - e_i}{B}
\intertext{The LoR cost is}
\jlrk{E,i}{M,B} &= \max\left( \min\left(1, \frac{e_i - e_L}{B}\right) \right. \\&\text{\quad} + \left.\min\left(1,\frac{e_R - e_i}{B}\right) - 1, 0 \right ) \\
&= \max\left( 1 + \frac{(e_R - e_i)}{B} - 1, 0\right) \\& = \frac{(e_R - e_i)}{B}
\end{align*} 
A symmetric argument holds in the case where $(e_i - e_L) < B$ and $(e_R-e_i) \geq B$. \\

\noindent
\textbf{Case 3: }$(e_i - e_L) < B$ AND $(e_R-e_i) < B$ 


Both $e_L$ and $e_R$ are within distance $B$ of $e_i$, so the smoothed LRU cost depends on the number of alignment shifts, $s$, for which $e_i$ is not in the same block as either $e_L$ or $e_R$, i.e.,
\begin{align*}
\lrushift{M,B}{E,i} &= \frac{1}{B}\sum_{s=0}^{B-1} \begin{cases}
1 & \text{ if } \lfloor\frac{e_i+s}{B}\rfloor \neq \lfloor\frac{e_R+s}{B}\rfloor \\ & \text{ {and} } \lfloor\frac{e_i+s}{B}\rfloor \neq \lfloor\frac{e_L+s}{B}\rfloor \\
0 & \text { otherwise}
\end{cases}
\intertext{For simplicity, assume that at alignment shift $s=0$, $e_i$ is in the last location of the block of size $B$.  Thus, the shifts from $s=0$ to $s=(B-1)$ define a $2B$ range around $e_i$ (i.e., $[e_i-B, e_i+B]$).  We define $p(e_L)$ and $p(e_R)$ to be the indexes of $e_L$ and $e_R$ in this $2B$ range, respectively. 
For all $0\leq s \leq p(e_L)$, $e_i$ is in the same block as $e_L$.  Similarly, for all $(p(e_R)-B) \leq s < B$, $e_i$ is in the same block as $e_R$  Thus, the cost is simply the number of shifts, $s$, where the entire block of size $B$ containing $e_i$ is strictly between $p(e_L)$ and $p(e_R)$, i.e.,}
\lrushift{M,B}{E,i} &= \frac{1}{B} \sum_{s=p(e_L)}^{p(e_R)-B} 1 \\
&= \frac{p(e_R) - B - p(e_L)}{B}
\intertext{and, since the cost cannot be negative, this becomes}
&= \max\left(\frac{p(e_R) - p(e_L)}{B} - 1, 0\right) 
\intertext{We know that $p(e_R) = B + e_R - e_i$ and $p(e_L) = B - (e_i - e_L)$, thus}
&= \max\left( \frac{B + e_R - e_i}{B} \right. \\& \text{\qquad} - \left.\frac{B - (e_i - e_L)}{B} - 1, 0\right) \\
&= \max\left(\frac{e_R - e_i}{B} + \frac{e_i - e_L}{B} - 1, 0\right) 
\intertext{Since both $e_L$ and $e_R$ are within distance $B$ of $e_i$, this is equal to LoR cost, i.e.,}
\jlrk{E,i}{M,B} &=  \max\left( \min\left(1, \frac{e_i - e_L}{B}\right) \right. \\& \text{\qquad} +  \left. \min\left(1,\frac{e_R - e_i}{B}\right) - 1, 0 \right ) \\
&= \max\left(\frac{e_i - e_L}{B} + \frac{e_R - e_i}{B} - 1, 0\right)
\intertext{Thus, for any access access, $e_i \in E$,}
&\jlrk{E,i}{M,B} = \lrushift{M,B}{E,i}
\end{align*}
where $$\loc_{M,B}(d,\delta) =\max\left (\min\left(1, \frac{d}{B}\right), \min\left(1,  \left\lfloor\frac{\delta}{M/B}\right\rfloor \right) \right).$$
Since they are equivalent for any access, $e_i \in E$, then for any execution sequence $E$,
\begin{align*}
\jlrk{E}{M,B} &= \lrushift{M,B}{E}
\end{align*}


Since the cache-oblivious cost is computed assuming ideal cache replacement, and LRU cache replacement with twice the memory is 2-competitive with ideal cache, we have
\begin{align*}
\ccoshift{E} &\leq \lrushift{M,B}{E} \\& \leq \jlrk{E}{M,B} \\&\leq \lrushift{M,B}{E} \\& \leq 2\cdot \ccoshift[\frac{M}{2}]{E}
\end{align*}
\end{proof}
\end{onlymain}

\begin{onlymain}
\noindent
We can also prove similar asymptotic equivalence result between the LoR and the ideal-cache models for the same $M$, if we consider \cosmooth algorithms:
\begin{lemma}\label{bivariate:equal}
For any \cosmooth cache-oblivious algorithm $A$, 
$B\ge 1$, $M \ge B^2$, execution sequence $E$ generated by $A$, and $\loc_{M,B}(d,\delta)~=~\max\left (\min\left(1, \frac{d}{B}\right), \min\left(1,  \left\lfloor\frac{\delta}{M/B}\right\rfloor \right) \right)$:
$$\jlrk{E}{M,B} = \Theta\left(\ccoshift{E}\right).$$
\end{lemma}
\begin{proof}
If $A$ is \cosmooth, then $\ccoshift{E} = \Theta(\ccoshift[\frac{M}{2}]{E})$, and, by Theorem~\ref{equiv-LR}, $\jlrk{E}{M,B} = \Theta\left(\ccoshift{E}\right)$.
\end{proof}

\subsection{Main result}

We now extend our result to any bidimensional locality function $\loc \in \blocset$.
\nodari{Still need to somehow include $f(x) \ge g(x)$ in the definition of $\blocset$.}
\begin{restatable}{thrm}{lrujumpopt}\label{lru-jump-optimal}
 A cache-oblivious algorithm $A$ for a \bstable problem $P$ is LRU-optimal if and only if it is LoR-optimal with respect to $\blocset$, where $\blocset$ is a set of all functions of the form $\loc(d, \delta) = \max (f(d), g(\delta))$, where $g(\delta)$ is a 0-1 threshold function, $f(x) \ge g(x)$  for all $x \ge 0$, and $f$ is a non-negative, non-decreasing, concave function. 
\end{restatable}
\begin{proof}
If algorithm $A_{\text{LoR}}$ is LoR-optimal for all bidimensional locality functions, then it is optimal for locality functions $\loc_{M,B}$, for any $M$ and $B$.
By Theorem~\ref{equiv-LR}, it follows that $A_{\text{LoR}}$ is LRU-optimal for any $M$ and $B$.
\begin{align*}
\intertext{To prove that LRU-optimal algorithms are also LoR-optimal, we consider problem $P$ and algorithm $A_{\text{LRU}}$ that solves $P$ with optimal $LRU$ cost, i.e.,}
\exists_{c,n_0} \forall_{n>n_0} \forall_{B\geq 1} \forall_{M\geq B^2} \forall_{A \in \mA_P} 
&\text{\quad} 
W^{\textsc{LRU}}_{M,B}(P,A_{\text{LRU}},n) \le c\cdot 
W^{\textsc{LRU}}_{M,B}(P,A,n) 
\intertext{And by the definition of the worst-case cost $W$,}
\exists_{c,n_0} \forall_{n>n_0} \forall_{B\geq 1} \forall_{M\geq B^2} \forall_{A \in \mA_P} 
&\text{\quad}
\max_{I \in \mI^P_n} (\lru{M,B}{E(A_{\text{LRU}},I)})
\leq c\cdot 
\max_{I \in \mI^P_n} (\lru{M,B}{E(A,I)}) 
\intertext{Since $P$ is \bstable, there is some instance $\Iw \in \mI_n^P$ for each $A$ such that}
\exists_{c,n_0} \forall_{n>n_0} \forall_{B\geq 1} \forall_{M\geq B^2} \forall_{A \in \mA_P} 
&\text{\quad}
\max_{I \in \mI^P_n} (\lru{M,B}{E(A_{\text{LRU}},I)})
\leq c \cdot \lru{M,B}{E(A,\Iw)} 
\intertext{and by Lemma~\ref{co-sco},}
\exists_{c,n_0} \forall_{n>n_0} \forall_{B\geq 1} \forall_{M\geq B^2} \forall_{A \in \mA_P} 
&\text{\quad}
\frac{1}{2}\max_{I \in \mI^P_n} (\lrushift{M,B}{E(A_{\text{LRU}},I)})
\leq 2c \cdot \lrushift{M,B}{E(A,\Iw)} 
\intertext{therefore, by Lemma~\ref{equiv-LR}}
\exists_{c,n_0} \forall_{n>n_0} \forall_{B\geq 1} \forall_{M\geq B^2} \forall_{A \in \mA_P} 
&\text{\quad}
\max_{I \in \mI^P_n} (\jlrk{E(A_{\text{LRU}},I)}{M,B})
\leq 4c \cdot \jlrk{E(A,\Iw)}{M,B} 
\intertext{Since this inequality holds for all $\loc_{M,B}$ functions, \nodari{{\bf all}? Only of the form below, no? That's what Lemma~\ref{equiv-LR} was proven for. In fact, it should probably be part of that Lemma's definition (for some reason it's commented out in the current writeup).}  we define a series of such functions that we use to represent any bidimensional locality function.  Recall that $\loc_{M,B}$ functions are of the form }
\loc_{M,B}(d,\delta)&=\max\left (\min\left(1, \frac{d}{B}\right), \min\left(1,  \left\lfloor\frac{\delta}{M/B}\right\rfloor \right) \right)
\end{align*}
where $B\ge 1$ and $M\geq B^2$.  Consider an arbitrary bidimensional locality function $\loc(d,\delta) = \max(f(d), g(\delta))$.  By Lemma~\ref{lin_comb}, we can represent the $f(d)$ component by a linear combination of $n$ memoryless $\loc_B$ functions (and therefore using the spatial component of $\loc_{M,B}$ functions).  By our definition of bidimensional locality functions, $g(\delta) = \left\lfloor \frac{\delta}{x}\right\rfloor$, for some integer $x$.  Thus, we simply set the temporal component of every one of our $\loc_{M,B}$ functions to be $g(\delta)$.  For a given bidimensional locality function $\loc$, we define $\loc^{\loc}_{k}$ to be the $k$-th such $\loc_{M,B}$ function that we use to represent it, i.e., $\loc^{\loc}_{k} = \max(\alpha^{\loc}_k \loc_{\beta^{\loc}_k}, g(\delta))$.\footnote{We note that, because we are limited to $M \geq B^2$ for our $\loc_{M,B}$ functions, we can only construct functions where $f(k) \geq g(k)$, for all $k>0$.  However, our definition of bidimensional locality functions includes this restriction, as it corresponds to the tall cache assumption (discussed in Section~\ref{bivariate-jump}).  
}
Thus, we have
\begin{align*}
\exists_{c',n_0} \forall_{n>n_0} \forall_{A \in \mA_P} \forall_{\loc \in \blocset} 
&\text{\quad}
\sum_{k=1}^{n} \max_{I \in \mI^P_n} \Big(\jr{\loc^{\loc}_{k}}{E(A_{\text{LRU}},I)} \Big)
 \leq c'
\sum_{k=1}^{n} \Big(\jr{\loc^{\loc}_{k}}{E(A,\Iw)} \Big)
\intertext{Instance $\Iw$ cannot result in greater cost than the instance that maximizes the total cost, so}
\exists_{c',n_0} \forall_{n>n_0} \forall_{A \in \mA_P} \forall_{\loc \in \blocset} 
&\text{\quad}
\sum_{k=1}^{n} \max_{I \in \mI^P_n} \Big(\jr{\loc^{\loc}_{k}}{E(A_{\text{LRU}},I)} \Big)
\leq c'
\max_{I \in \mI^P_n} \sum_{k=1}^{n} \Big(\jr{\loc^{\loc}_{k}}{E(A,I)} \Big)
\intertext{Moving the max outside of the summation can only decrease the cost of the left hand side of the inequality, thus}
\exists_{c',n_0} \forall_{n>n_0} \forall_{A \in \mA_P} \forall_{\loc \in \blocset} 
&\text{\quad}
\max_{I \in \mI^P_n} \sum_{k=1}^{n} \Big(\jr{\loc^{\loc}_{k}}{E(A_{\text{LRU}},I)} \Big)
\leq c'
\max_{I \in \mI^P_n} \sum_{k=1}^{n} \Big(\jr{\loc^{\loc}_{k}}{E(A,I)} \Big)
\intertext{  The proof of Corollary~\ref{lin_comb_sum} applies, giving us \nodari{why is there a different constant $c''$?}}
\exists_{c'',n_0} \forall_{n>n_0} \forall_{A \in \mA_P} \forall_{\loc \in \blocset} 
&\text{\quad}
\max_{I \in \mI^P_n} (\jlr{E(A_{\text{LRU}},I)})
\leq c''\cdot 
\max_{I \in \mI^P_n} (\jlr{E(A,I)}) 
\intertext{Using our definition of the worst-case LoR cost,}
\exists_{c'',n_0} \forall_{n>n_0} \forall_{A \in \mA_P} \forall_{\loc \in \blocset} 
&\text{\quad}
W_{\loc}(P,A_{\text{LRU}},n)
\leq c'' \cdot 
W_{\loc}(P,A,n) 
\end{align*}
Therefore, any LRU-optimal algorithm is also LoR-optimal.
\end{proof}
\end{onlymain}
\begin{onlymain}


\begin{restatable}{thrm}{cosmoothoptiff}\label{cosmooth-opt-iff}
A \cosmooth cache-oblivious algorithm $A$ for a \bstable problem $P$ is CO-optimal if and only if $A$ is LoR-optimal with respect to $\blocset$, where $\blocset$ is a set of all functions of the form $\loc(d, \delta) = \max (f(d), g(\delta))$, where $g(\delta)$ is a 0-1 threshold function, $f(x) \ge g(x)$  for all $x \ge 0$, and $f$ is a non-negative, non-decreasing, concave function.
\end{restatable}
\begin{proof}
Since the cache-oblivious model assumes ideal cache replacement, for any execution sequence $E$,
$\ccoshift{E} \leq \lrushift{M,B}{E} \leq 2\cdot \ccoshift[\frac{M}{2}]{E}$.
Since algorithm $A$ is \cosmooth, for any execution sequence $E$ generated by $A$,
$\ccoshift[\frac{M}{2}]{E} = \Theta(\ccoshift{E})$.
Therefore,
$\lrushift{M,B}{E} = \Theta(\ccoshift{E})$.
Since the LRU cost and CO cost are asymptotically equivalent for every execution sequence generated by $A$, then $A$ is asymptotically LRU-optimal if and only if it is asymptotically CO-optimal and, by Theorem~\ref{lru-jump-optimal}, $A$ is LoR-optimal if and only if it is CO-optimal.
\end{proof}

\end{onlymain}

\section{Conclusion}\label{conclusion}
Despite the increasing complexity of modern hardware architectures, the goal of many design and optimization principles remain the same: maximize locality of reference.  Even many of the optimization techniques used by modern compilers, such as branch prediction or loop unrolling~\cite{DBLP:books/daglib/0022093}, can be seen as methods of increasing spatial and/or temporal locality.  
As we demonstrated in this work, cache-oblivious algorithms do just that, suggesting that the performance benefits of such algorithms extend beyond what was originally envisioned.

That is to say, although we have introduced a new way to model computation via locality functions, we are not advocating algorithm design and analysis using locality functions. Instead, though our transformations, we have shown that creating the best possible algorithms in the existing cache-oblivious models is they right way to design algorithms not just for a multi-level cache, but for any locality-of-reference-rewarding system. One can thus conclude that the cache-oblivious model is better than we thought it was.


\newpage
\bibliography{c-o,paper}

\excludecomment{onlymain}
\includecomment{onlyproof}
\excludecomment{onlyapp}


\newpage
\appendix
\section{Deferred proofs}\label{proofs}

\excludecomment{onlymain}
\excludecomment{onlyproof}
\includecomment{onlyapp}

\section{Necessity of \texorpdfstring{$B$}{B} stability} \label{s:needbstable}
The following lemma shows that  Theorem~\ref{t:main} would not hold if the restriction to $B$-stable algorithms were to be removed. 

\begin{lemma} \label{l:needbstable}
There exists a problem $P$ which is not \bstable and which has a CO-optimal algorithm which is not LoR optimal. 
\end{lemma}

\begin{proof}
Here we demonstrate a toy problem that meets the requirements of the lemma while also illustrating the unnaturalness of such problems. 
It has two candidate algorithms, one which has the same runtime on each instance, and a second one that for each instance has some values of $B$ that it runs faster than the first algorithm, and some that it runs more slowly than the first algorithm on, asymptotically. Thus for each $B$ the worst-case time of the first algorithm is better than the second, but there is no single bad instance for the second algorithm.

Consider a problem $P$ and a set $\mathcal{A}$ of two cache-oblivious algorithms $A_1$ and $A_2$. The problem, given an $n$, has a set of $n$ instances $\mathcal{I}_n=I^n_1, I^n_2, \ldots I^n_n$. The runtimes of the two algorithms are given as follows:
\begin{align*}
\cco(E(A_1,I^n_i)) &= \Theta \left( \min\left(  \frac{n \log n \log \log n} {\log i}, 
\frac{i  \cdot n \log n \log \log n}{ B \log i } \right)\right)\\ 	
\cco(E(A_2,I^n_i))  &= \Theta\lrp{\frac{n \log n \log \log \log n}{\log B}}
\end{align*}

These runtimes can be realized through an appropriately twisted problem definition that forces an algorithm for each instance to read all elements in one of two sets of memory locations in order to be considered a valid algorithm.
In particular our problem admits two algorithms, one of which, $A_2$, can solve any instance by performing $n\log n \log \log \log n$ reads in memory generated by $n \log \log \log n$ searches in a van Emde Boas search structure, and the other, $A_1$, by reading at memory locations generated by an arithmetic progression, where the step and number of locations depends on the instance.

Accessing $k$ memory locations evenly spaced $\sigma$ apart takes $\Theta(1+\min\left( k,\frac{k \sigma}{B} \right))$ I/Os in the ideal-cache model; thus the desired runtime of algorithm $A_1$ on instance $I^n_i$ can be forced by having the algorithm $A_1$ instance $I_i$ read $\frac{n \log n \log \log n}{\log i}$ memory locations evenly spaced $i$ apart.

What are the worst-case runtimes of these algorithms?

\begin{align*}
\wcco(P,A_1,n) &= \max_{I^n_i \in  \mathcal{I}_n} \cco(E(A_1,I^n_i ))
\\
&= \max_{I^n_i \in  \mathcal{I}_n}\min
\overbrace{\left(  \frac{n\log n \log \log n} {\log i},
\frac{i\cdot n \log n \log \log n}{ B \log i } \right)}^{\text{Equal when $i=B$}}
\\
&= \frac{n \log n \log \log n}{\log B}
\\
\wcco(P,A_2,n)&=\frac{n\log n \log \log \log n}{\log B}
\end{align*}
So, looking at these two algorithms, $A_2$ is clearly the worst-case optimal in the CO model. 

Now, recall the definition of \bstability (Definition~\ref{def:bstable}): Problem $P$ is \emph{\bstable} if, for any algorithm $A$ that solves $P$,
$$
\exists_{c>0,n_0>0} \forall_{n>n_0}\exists_{\Iw \in \mI_n^P} \forall_{B\ge 1} \forall_{M \in \mM}  \text{\quad}
\min_{A' \in \mA_P} \wcco(P,A',n) \le c \cdot \cco(E(A,\Iw))  
$$

Applying this to our problem gives:
\begin{align*}
\exists_{c>0,n_0>0} \forall_{n>n_0}\exists_{I^n_i \in \mI_n^P} \forall_{B\ge 1} \forall_{M \in \mM}
 \text{\quad} &  \min\left(\frac{n\log n \log \log n}{\log B}, \frac{n\log n \log \log \log n}{\log B}\right) 
\\ &\hspace{1cm} \le  
c \cdot\min\left(  \frac{n\log n \log \log n} {\log i},
\frac{i n\log n \log \log n}{ B \log i } \right)
%
%
\end{align*}

This is false for all choices of ${I^n_i \in \mI_n^P}$. Specifically, if $i\geq \log n$, then setting $B=2$ and using the first term of the min gives the following contradiction for any $c$ as $n$ grows:
\begin{align*}
\frac{n \log n \log \log \log n}{\log 2}
&\leq 
c \cdot\min\left(  \frac{n\log n {\log \log n}} {\log i},
\frac{in \log n {\log \log n}}{ B \log i } \right) \\
&\leq
c \cdot  \frac{n\log n {\log \log n}} {\log i}
\\&\leq
c \cdot  \frac{n\log n {\log \log n}} {\log \log n}
\\&=
c \cdot  {n \log n } 
\end{align*}
And, if $i\leq \log n$, then setting $B=n$ and using the second term on the right gives the following contradiction:
\begin{align*}
\frac{ n \log n \log \log \log n}{\log n}
&=n \log \log \log n\\&\leq 
c \cdot\min\left(  \frac{n\log n {\log \log n}} {\log i},
\frac{in \log n {\log \log n}}{ B \log i } \right)
\\&\leq
c \cdot\frac{n \log n \log n {\log \log n} }{ n }
\\&=c \cdot {\log^2 n \log\log n} 
\end{align*}
\nodari{we cannot claim that $i/\log i \le \log n /\log\log n$, but only that $i/\log i \le \log n$.}

This concludes the proof that $P$ is not $B$ stable. We now argue that while $A_2$ is CO-optimal for $P$, and $A_1$ is not, with locality function $\ell(d)=\log(d)$ the reverse is true as $A_1$ will have the asymptotically better runtime with this locality function in the LoR model.

In the introduction we mentioned that the LoR runtime with locality function $\ell(d)=\log d$ for searching in a vEB structure is $\Theta(\log n \log \log n)$, thus since $A_2$ does this $n \log \log \log n$ times its cost is $\Theta(n \log n \log \log n \log \log \log n)$. 
Algorithm $A_1$ is easy to analyze as on instance $I_i^n$ it accesses $\frac{n \log n \log \log n}{\log i}$ memory locations evenly spaced $i$ apart, thus its cost is
$$
\frac{n \log n \log \log n}{\log i} \cdot \log i = \log n \log \log n
.
$$
Thus the CO-optimal $A_2$ has a LoR runtime (with $\ell(d)=\log d$) of $\Theta(\log n \log \log n \log \log \log n)$ which is a $\Theta(\log \log \log)$ factor worse than the non-CO-optimal $A_1$  with LoR runtime of $\log n \log \log n$. Since $A_2$ is not optimal for one locality function, it can not be optimal for all valid locality functions.
\end{proof}

What made this problem not \bstable ? It was the fact that every instance was constructed to be faster for one algorithm for some values of $B$ and slower for others than the optimal worst-case algorithm. In this example $A_1$ ran instance $I_i^n$ slower than $A_2$ for  $B$ close to $i$ and faster than $A_2$ for $B$ far from $i$. However, it is far from natural to have an instance in effect encode faster-than-worse-case performance on selected $B$'s. In a standard data structure query, such as ``what is the predecessor of a given item in an ordered set,'' the query item itself has nothing that combined with the problem definition allows a query to encode a preference for fast execution for certain $B$'s in a non-optimal algorithm. We note that this is very different than some algorithms which may ``hard-code'' some instances and make them fast; this does not pose a problem with regards to $B$-stability as this makes this instance fast for all values of $B$.

\end{document}